\newtheorem{theorem}{Theorem}
\newcommand{\mfrac}[2]{\genfrac{}{}{0pt}{}{#1}{#2}}
\begin{document}

\title{Estimation-Energy Tradeoff for Scalar Gauss-Markov Signals with Kalman Filtering}

\author{Ioannis Krikidis, \IEEEmembership{Fellow, IEEE} and Constantinos Psomas, \IEEEmembership{Senior Member, IEEE}\vspace*{-5mm}
\thanks{I. Krikidis and C. Psomas are with the Department of Electrical and Computer Engineering, University of Cyprus, Cyprus (email: \{krikidis, psomas\}@ucy.ac.cy). This work has received funding from the European Research Council (ERC) under the European Union's Horizon 2020 research and innovation programme (Grant agreement No. 819819).}}

\maketitle

\begin{abstract}
In this letter, we investigate a receiver architecture, which uses the received signal in order to simultaneously harvest energy and estimate a Gauss-Markov linear process. We study three communication scenarios: i) static channel, ii) Rayleigh block-fading channel, and iii) high power amplifier (HPA) nonlinearities at the transmitter side. Theoretical results for the minimum mean square error as well as the average harvested energy are given for all cases and the fundamental tradeoff between estimation quality and harvested energy is characterized. We show that channel fading improves the estimation performance while HPA requires an extended Kalman filter at the receiver and significantly affects both the estimation and the harvesting efficiency.
\end{abstract}
\vspace{-0.2cm}
\begin{keywords}
Wireless power transfer, power-splitting, Kalman filter, extended Kalman filter, fading channel, estimation. 
\end{keywords}

\vspace{-0.6cm}
\section{Introduction}

Wireless power transfer (WPT) is an enabling technology for future cyber-physical systems, where a massive number of devices (e.g., sensors, RFID tags, actuators, etc) with low-rate and low-power requirements are interconnected in order to monitor and control critical infrastructures. It can provide energy sustainability, battery-less implementation and therefore it seems as an essential technology for the next generation of cyber-physical systems that require continuous estimation/monitoring of a physical process. The integration of WPT in wireless communication systems has been studied extensively over the last years and the fundamental tradeoff between information and energy transfer has been studied from information theoretic \cite{VAR,KRI2}, signal processing and networking perspectives \cite{CLE}. However, the interplay of WPT with conventional parameter/signal estimation, which is the key task in the above systems, is still an unexplored area.

Recent works study the integration of WPT in signal estimation problems. The authors in \cite{HON} investigate a distributed estimation problem, where sensors harvest energy from radio-frequency signals and use it, in order to transmit their observations to a fusion center; the parameters of WPT are designed to minimize the mean-square error (MSE) of the final estimate. This work has been extended in \cite{VEN} for scenarios with multiple-antenna sensor nodes, while the work in \cite{KRI3} studies the problem from a macroscopic point-of-view by taking into account the spatial randomness of the sensor network. All the aforementioned works decouple the harvesting operation from the signal estimation process and energy harvesting is used in order to allow sensors to communicate their observations. Another topology with practical interest is the simultaneous parameter estimation and energy harvesting from the same signal; in accordance with the simultaneous information and power transfer concept, various practical architectures such as the power-splitting (PS) technique can be used for this purpose \cite{CLE}.  To the best of the authors' knowledge, this dual use of the received signal (signal/state estimation, energy harvesting) is still an open problem in the literature.

In this paper, we investigate a PS-based architecture that uses the received signal in order to simultaneously estimate a scalar Gauss-Markov signal (i.e.,  sequence of states evolving as a dynamic linear model) \cite{PAR1,PAR2} and harvest energy. Specifically, we study the use of first-order Gauss-Markov signals for WPT. We consider three scenarios: i) transmission over a static wireless channel, ii) transmission over a Rayleigh block-fading channel, and iii) the case where the transmitter is affected by high power amplifier (HPA) nonlinearities \cite{KRI2}. Closed-form expressions for the average harvested energy (linear model) are derived for both finite and asymptotic regimes.  A receiver architecture is also introduced, which enables a simultaneous estimation of the transmitted signal and energy harvesting, by employing a PS technique. For the estimation process, the receiver applies a linear Kalman filter on the corresponding fraction of the signal by taking into account the memory of the scalar Gauss-Markov signal; for the scenario with HPA, we employ an extended Kalman filter. Theoretical results are given for the minimum MSE (MMSE) for all the considered scenarios. The considered architecture is characterized by a fundamental tradeoff between quality of the estimation (MMSE) and average harvested energy.
\noindent {\it Notation:}  $\mathbb{E}\{\cdot\}$ denotes statistical expectation, $\exp(\lambda)$ denotes the exponential distribution with rate parameter $\lambda$, $\mathcal{CN}(\mu,\sigma^2)$ represents the complex Gaussian distribution with mean $\mu$ and variance $\sigma^2$,  $\Re\{\cdot\}$ and $\Im\{\cdot\}$ indicate real and imaginary operators, respectively, and $j=\sqrt{-1}$.

\vspace{-0.2cm}
\section{Estimation-energy tradeoff}

In this section, we introduce the considered receiver architecture and study the fundamental estimation-energy tradeoff for three scenarios: i) static channel, ii) Rayleigh block-fading, and iii) transmission with HPA non-linearities. 

\vspace{-0.3cm}
\subsection{Gauss-Markov over a static channel}

We assume a scalar complex Gauss-Markov signal model\cite{KAY}, given by\footnote{The coefficient $a$ depends on the specific application (e.g., navigation, robotics, etc.) and determines the correlation between consecutive measurements; in practical dynamic systems where small fluctuations occur, we assume $|a|<1$, which results in stable systems. The case $a > 1$ corresponds to unstable open-loop processes where, in most cases, they are associated with a control system to ensure stability.}
\begin{align}
\text{State:}\;\; &x(n)=a x(n-1)+u(n), \;\;n\geq 0, \\
\text{Observations:}\;\; &y(n)=hx(n)+v(n),
\end{align}
where $n$ is the time index $x(-1)\sim \mathcal{CN}(\mu_0,\sigma_0^2)$,  $u(n)\sim \mathcal{CN}(0,\sigma_u^2)$, $v(n)\sim \mathcal{CN}(0,\sigma_v^2)$, $|a|<1$,  and $h \in \mathbb{C}$ denotes the complex channel coefficient which is assumed constant and known at the receiver.
According to the principles of PS \cite{CLE}, the receiver splits the received signal $y(n)$ into two components by using a splitting factor $\rho$ i.e., $\sqrt{\rho}y(n)$ is converted to the baseband and is used to estimate $x(n)$ and $\sqrt{1-\rho} y(n)$ is driven towards the energy harvesting branch. During the baseband conversion, an additional circuit noise, $q(n)$, is present which is modeled as additive white Gaussian noise (AWGN) with zero mean and variance $\sigma_q^2$. The system model considered is depicted in Fig. \ref{fig1} and could refer to wireless sensor networks that monitor the state of a dynamic physical phenomenon (e.g., air quality monitoring, object tracking, etc.). The obtained measurements are sent over a fading channel by using analog forwarding to avoid processing delays \cite{PAR2}. This setup seems interesting for future cyberphysical systems, where a massive number of sensors/devices exchange measurements and are characterized by strict computation/delay/energy constraints.

For data estimation, the equivalent (baseband) observation can be written as 
\begin{align}
&y'(n)=\sqrt{\rho}h x(n)+e(n),
\end{align}
where $e(n)=\sqrt{\rho}v(n)+q(n)$ is circularly symmetric complex Gaussian with variance $\sigma_e^2=\rho \sigma_v^2+\sigma_q^2$. Due to the data correlation, the estimation of $x(n)$ takes into account the whole history of observations and is based on the Kalman filter equations\footnote{ The Kalman filter is a time-varying linear filter and achieves linear MMSE in a sequential way by introducing one new sample per time and by applying trivial (low-complexity) algebraic computations \cite{KAY}.} \cite{KAY, PAR1} i.e.,
\begin{align}
&\hat{x}(n|n-1)=a \hat{x}(n-1|n-1), \label{e1}\\
&M(n|n-1)=a^2 M(n-1|n-1)+\sigma_u^2, \label{e2} \\
&K(n)=\frac{M(n|n-1)\sqrt{\rho}h^*}{\sigma_e^2+\rho |h|^2 M(n|n-1)}, \label{e22} \\
&\hat{x}(n|n)=\hat{x}(n|n-1)+K(n)[y'(n)-\sqrt{\rho}h \hat{x}(n|n-1)], \label{e3}\\
&M(n|n)=(1-K(n)\sqrt{\rho}h)M(n|n-1) \nonumber \\
&\;\;\;\;\;\;\;\;\;\;\;\;=\frac{a^2 \sigma_e^2 M(n-1|n-1)+\sigma_e^2 \sigma_u^2}{\sigma_e^2+\rho |h|^2(a^2 M(n-1|n-1)+\sigma_u^2)}, \label{e4}
\end{align}
where $\hat{x}(n|n-1)=\mathbb{E}\{x(n)|y(0), y(1),\ldots, y(n-1)\}$ is the MMSE prediction based on the previous estimated state, \eqref{e2} represents the minimum prediction MSE, \eqref{e22} gives the Kalman gain, \eqref{e3} denotes the correction equations and \eqref{e4} is the updated MMSE. The initial conditions are given by $\hat{x}(-1|-1)=\mu_0$ and $M(-1|-1)=\sigma_0^2$.

For the asymptotic case ($n\rightarrow \infty$), the MMSE converges to a deterministic value which can be calculated by solving the following quadrature equation \cite{KAY}
\begin{align}
M(\infty)=\frac{\sigma_e^2(a^2 M(\infty)+\sigma_u^2)}{\sigma_e^2+\rho|h|^2 (a^2 M(\infty)+\sigma_u^2)},
\end{align}
with a solution equal to
\begin{align}
M(\infty)=\frac{-Q_2+\sqrt{Q_2^2-4Q_1Q_3}}{2Q_3},\label{asymp1}
\end{align}
where $Q_1=\rho|h|^2a^2$,  $Q_2=\rho|h|^2\sigma_u^2+\sigma_e^2(1-a^2)$,  and $Q_3=-\sigma_u^2\sigma_e^2$.
For the special case with $\rho=0$ i.e., the received signal is used only for energy harvesting, the MMSE becomes equal to
\begin{align}
M(\infty)=\sigma_u^2/(1-a^2)=\; \sigma^2.
\end{align}
Then, by using a linear WPT model\footnote{A linear WPT model refers to the linear operation regime of the rectenna and serves as a useful guideline for practical implementations; more sophisticated non-linear models can be also used \cite{CLE, KRI2}.} and by neglecting energy harvesting from the AWGN component, the average harvested energy for the time instant $n$ becomes equal to
\begin{align}
&\mathcal{E}(n)=\zeta \mathbb{E}\{ |\sqrt{1-\rho} h x(n)|^2\}=\zeta (1-\rho)|h|^2\mathbb{E}\{|x(n)|^2\},\label{harvesting}
\end{align}
where $\zeta$ is the conversion efficiency of the energy harvester and $|x(n)|$ is a Rician distributed random variable with parameters $a^{n+1}\mu_0$ and $\sqrt{(a^{2n+2}\sigma_0^2+\sigma^2(1-a^{2n+2}))/2}$, since $x(n) \sim \mathcal{CN}(a^{n+1}\mu_0, a^{2n+2}\sigma_0^2 + \sigma^2(1-a^{2n+2}))$. Therefore, we have
\begin{align}
\mathbb{E}\{|x(n)|^2\} = a^{2n+2}(\mu_0^2 + \sigma_0^2)+\sigma^2(1-a^{2n+2}).\label{var2}
\end{align}
Then, for the asymptotic case with $n\rightarrow \infty$, we have $x(n)\sim \mathcal{CN}(0,\sigma^2)$ and thus $|x(n)|$ follows a Rayleigh distribution with 
\begin{align}
&\mathbb{E}\{|x(n)|^2\}\rightarrow \sigma^2,\label{var3}
\end{align}
and so
\begin{align}\label{no_hpa}
&\mathcal{E}(\infty)=\lim_{n\rightarrow \infty}\mathcal{E}(n)=\zeta \frac{(1-\rho)|h|^2 \sigma_u^2}{1-a^2}.
\end{align}
It is worth noting that $\mathcal{E}(\infty)=\zeta |h|^2\mathcal{M}(\infty)$ for $\rho=0$.

\begin{figure}
\includegraphics[width=\linewidth]{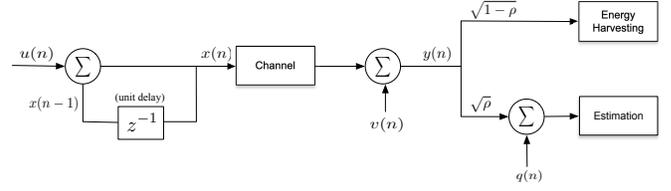}
\vspace{-0.6cm}
\caption{System model; A receiver architecture that simultaneously estimates a Gauss-Markov signal and harvests energy by using PS.} \label{fig1}
\end{figure}

\vspace{-0.2cm}
\subsection{Gauss-Markov over a Rayleigh fading channel}

In this case, the state $x(n)$ is transmitted through a Rayleigh fading channel with $|h(n)|^2 \sim \exp(-\lambda)$ \cite{PAR1,PAR2}. The Kalman filter equations are similar to \eqref{e1}-\eqref{e4} by replacing the constant $|h|^2$ with the random variable $|h(n)|^2$. The MMSE becomes a nonstationary random variable with a cumulative density function (CDF) upper/lower bounded (for any $n$) according to Theorem \ref{th1} given below. 
\begin{theorem}\label{th1}
For a Rayleigh fading channel, the MMSE is a nonstationary random variable with a CDF bounded as follows
\begin{align}
F \left(\sigma_u^2(1+a^2),x\right) \leq F_{M_{n}}(x)\leq F \left(\frac{\sigma_u^2}{1-a^2},x\right),
\end{align}
where
\begin{align}
F(c,x)=\left\{ \begin{array}{ll} \exp \left(-\frac{\lambda \sigma_e^2}{\rho}\left(\frac{1}{x}-\frac{1}{c} \right)\right),\;\;\textrm{if}\;x<c, \\
1,\;\;\textrm{elsewhere}.\end{array}\right.\label{cdf1}
\end{align}
\end{theorem}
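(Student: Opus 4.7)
The plan is to condition on $M_{n-1}$ so that $M_n$ becomes a strictly decreasing deterministic function of the single Rayleigh-fading power $g_n:=|h(n)|^2\sim\exp(\lambda)$. The conditional CDF then admits a closed form via the exponential survival function of $g_n$, and the two envelopes in the theorem emerge by substituting almost-sure bounds on the prior variance $W_{n-1}:=a^2 M_{n-1}+\sigma_u^2$.

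Concretely, I would first adapt the Kalman update \eqref{e4} to the time-varying channel by replacing $|h|^2$ with $g_n$, giving
\[
M_n=\frac{\sigma_e^2\,W_{n-1}}{\sigma_e^2+\rho\,g_n\,W_{n-1}}.
\]
Solving the inequality $M_n<x$ for $g_n$ (valid whenever $W_{n-1}>x$, and automatic otherwise) and using the independence of $g_n$ from $W_{n-1}$ together with its exponential law produces
\[
\Pr(M_n<x\mid W_{n-1})=F(W_{n-1},x),
\]
with $F$ exactly the function defined in \eqref{cdf1}. A direct check shows $F(c,x)$ is monotone in $c$ for each fixed $x$, so any almost-sure sandwich $w_{\min}\le W_{n-1}\le w_{\max}$ transfers, after taking expectation over $M_{n-1}$, to a sandwich of $F_{M_n}(x)=\mathbb{E}[F(W_{n-1},x)]$ by the two corresponding $F(\cdot,x)$ values.

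The second step is to derive these envelopes from the Kalman recursion itself. Iterating the contraction $M(n|n)\le M(n|n-1)=a^2M(n-1|n-1)+\sigma_u^2$ with $|a|<1$ and summing the geometric series yields $M_{n-1}\le\sigma_u^2/(1-a^2)$, and hence $W_{n-1}\le\sigma_u^2/(1-a^2)$ (under a compatible initial condition $\sigma_0^2$). The complementary envelope $W_{n-1}\ge\sigma_u^2(1+a^2)$ comes from combining the trivial $M_{n-1}\ge 0$ with one additional prediction step, which injects an extra $\sigma_u^2$ (weighted by $a^2$ at the outer step) into $W_{n-1}$ through the previous instant's process noise.

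The main obstacle is justifying the sharper lower envelope $W_{n-1}\ge\sigma_u^2(1+a^2)$. A single step of the Kalman update does not deliver it, since an adversarial realization of $g_{n-1}$ can drive $M_{n-1}$ near zero. The argument instead has to trace the $\sigma_u^2$ contribution of the predictor two time steps back and verify that the $a^2\sigma_u^2$ term cannot be eroded by any channel realization; once that is in place, substituting both envelopes into $F(W_{n-1},x)$ and taking expectation delivers the sandwich stated in the theorem.
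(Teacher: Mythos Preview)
Your approach mirrors the paper's: both freeze the prior $M(n\!\mid\! n-1)$ (your $W_{n-1}$), reduce the randomness to the single exponential variable $g_n=|h(n)|^2$, read off the closed-form CDF $F(c,x)$, and then transfer deterministic bounds on the prior into the CDF sandwich via monotonicity of $F$ in $c$.

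On the obstacle you flag: the paper does not resolve it by a two-step trace-back. It records $M(n\!\mid\! n-1)\ge\sigma_u^2$ (from $M(n-1\!\mid\! n-1)\ge 0$), then writes the simplified posterior with $c=a^2c_b+\sigma_u^2$ and substitutes $c_b=\sigma_u^2$ to obtain $c=\sigma_u^2(1+a^2)$. That extra $a^2(\cdot)+\sigma_u^2$ step is precisely what manufactures the $(1+a^2)$ factor, but it amounts to applying the prediction map once more to an already-predicted quantity, which is not justified in the text. Your instinct that this point needs care is correct; moreover, your proposed two-step fix cannot close the gap either: a single large realization of $g_{n-1}$ drives $M_{n-1}$ arbitrarily close to $0$, so the sharp almost-sure lower envelope on $W_{n-1}$ is $\sigma_u^2$, not $\sigma_u^2(1+a^2)$. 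In other words, the paper's argument and your sketch both stall at the same place; the cleanly provable envelope by this route is $F(\sigma_u^2,x)$, with the stated $F(\sigma_u^2(1+a^2),x)$ remaining heuristic.
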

\begin{proof}
The proof is given in Appendix A.
\end{proof}
It is obvious that for small values of $a$, the two bounds become tight and both approximate efficiently the actual CDF i.e., $F_{M_{n}}(x)\approx F(\sigma_u^2, x)$.

Regarding the bounding of the mean value of the MMSE, we derive the expected values associated with the two considered bounds. Specifically, the expected value corresponding to the CDF in \eqref{cdf1} is given by
\begin{align}
\Theta(c)&=\int_{0}^{c}(1- F(c,x))dx \nonumber \\
&=\frac{\lambda \sigma_e^2}{\rho}\exp\left(\frac{\lambda \sigma_e^2}{\rho c} \right) \Gamma\left(0,\frac{\lambda \sigma_e^2}{\rho c} \right),
\end{align} 
which follows through integration by parts and from the definition of the upper incomplete gamma function $\Gamma(s,x)=\int_x^{\infty}t^{s-1}\exp(-t)dt$. Therefore, the mean MMSE is bounded as follows i.e., 
\begin{align}
\Theta(\sigma_u^2(1+a^2))\leq \mathbb{E}\{M_{n}\}\leq \Theta\left(\frac{\sigma_u^2}{1-a^2} \right),
\end{align}
and thus for low values of $a$, we have $\mathbb{E}\{M_{n}\} \approx \Theta(\sigma_u^2)$. By using the expression in \eqref{harvesting}, the mean harvested energy becomes
\begin{align}
&\mathcal{E}(n)=\zeta \mathbb{E}\{|\sqrt{1-\rho}h(n)x(n)|^2 \} \nonumber \\
&=\zeta (1-\rho)\mathbb{E}\{|h(n)|^2\}\mathbb{E}\{|x(n)|^2 \}=\frac{\zeta(1-\rho)}{\lambda}\mathbb{E}\{|x(n)|^2\},
\end{align}
where $\mathbb{E}\{|x(n)|^2\}$ is given by \eqref{var2} or \eqref{var3}.  It is worth noting that the above mathematical framework is general and can be applied to other channel fading models e.g., Rice/Nakagami.

\subsection{High power amplifier nonlinearities}

We study the impact of high power amplifier nonlinearities on the estimation-energy tradeoff. 
Specifically, we assume that the transmitter suffers from HPA nonlinearities \cite{KRI2}; in this case the transformation of the states to observations becomes nonlinear. The output of the power amplifier for a complex input $z$ is given by 
\begin{align}
F_{\text{HPA}}(z)&=f_A(|z|)\exp(j(f_P(|z|)+\arg{z})),
\end{align}
where $f_A(|z|)$ and $f_P(|z|)$ denote the AM/AM and AM/PM distortion, respectively. For the solid state power amplifier (SSPA) model \cite{KRI2}, the AM/AM and AM/PM characteristic functions are given by
\begin{align}\label{char_fun}
f_A(|z|)=\frac{|z|}{\left[ \left( \frac{|z|}{A_{\text{sat}}} \right)^{2\beta}+1 \right]^{\frac{1}{2\beta}}},\;\;\;f_P(|z|)=0,
\end{align} 
where $A_{\text{sat}}$ is the output saturation voltage and $\beta$ represents the smoothness of the transition from the linear regime to the saturation. In this case, the Gauss-Markov nonlinear model is
\begin{align}
\text{State:}\;\;&x(n)=a x(n-1))+u(n), \;\;n\geq 0, \\
\text{Observations:}\;\;&y(n)=hF_{\text{HPA}}(x(n))+v(n).
\end{align}
By linearizing the function $F_\text{HPA}(x(n))$ through the first-order Taylor expansion i.e., $F_{\text{HPA}}(x(n))\approx F_{\text{HPA}}(\hat{x}(n|n-1))+f(n)(x(n)-\hat{x}(n|n-1))$ and by applying the conventional linear Kalman filter, we have the extended Kalman filter given by \cite[Ch. 13.7]{KAY}
\begin{align}
&\hat{x}(n|n-1)=a \hat{x}(n-1|n-1), \\
&M(n|n-1)=a^2 M(n-1|n-1)+\sigma_u^2,\\ 
&K(n)=\frac{M(n|n-1)\sqrt{\rho}h^*f(n)^*}{\sigma_e^2+\rho |h|^2 |f(n)|^2 M(n|n-1)}, \label{f22} \\
&\hat{x}(n|n)=\hat{x}(n|n-1)+K(n)[y'(n)\nonumber\\
&\qquad\qquad\qquad\qquad\qquad-\sqrt{\rho}h F_{\text{HPA}}(\hat{x}(n|n-1))], \label{f3}\\
&M(n|n)=(1-K(n)\sqrt{\rho}hf(n))M(n|n-1) \nonumber \\
&=\frac{a^2 \sigma_e^2 M(n-1|n-1)+\sigma_e^2 \sigma_u^2}{\sigma_e^2+\rho |h|^2 |f(n)|^2 (a^2 M(n-1|n-1)+\sigma_u^2)},\label{f4}
\end{align}
where $f(n)=\left. \frac{dF_\text{HPA}}{d x(n)} \right|_{x(n)=\hat{x}(n|n-1)}$ is given in Appendix B. The average harvested energy is given by
\begin{align}
\mathcal{E}(n) &=\zeta \mathbb{E}\{|y'(n)|^2\}=\zeta \mathbb{E}\{|\sqrt{1-\rho}h F_{\text{HPA}}(x(n))|^2\}\nonumber \\
&= \zeta (1-\rho)|h|^2 \mathbb{E}\{|F_{\text{HPA}}(x(n))|^2\}\},
\end{align}
where
\begin{align}
\mathbb{E}\{|F_{\text{HPA}}(x(n))|^2\} &= \int_0^\infty f_A(|x|)^2 \frac{2x}{\nu^2}\exp\!\left(\!-\frac{x^2\!+\!a^{2n+2}\mu_0^2}{\nu^2}\right)\nonumber\\
&\qquad\qquad\times I_0\left(\frac{2 x a^{n+1}\mu_0}{\nu^2}\right)dx,\label{hpa_rice}
\end{align}
where $f_A(|x|)$ is given by \eqref{char_fun}, $\nu^2 = a^{2n+2}\sigma_0^2 + \sigma^2(1-a^{2n+2})$, and $I_0(\cdot)$ is the modified Bessel function of the first kind. The above integral can be evaluated numerically but we can provide a closed-form for high values of $n$. In this case, the envelope of $x(n)$ follows a Rayleigh distribution with scale parameter $\sigma/\sqrt{2}$. Therefore, we have
\begin{align}
&\mathbb{E}\{|F_{\text{HPA}}(x(n))|^2 \}\!=\!\int_{0}^{\infty}\!\! \frac{x^2}{\left[ \left( \frac{x}{A_{\text{sat}}} \right)^{2\beta}\!+\!1 \right]^{\frac{1}{\beta}}}\frac{2x}{\sigma^2}\exp\left(-\frac{x^2}{\sigma^2} \right)dx\nonumber \\
&=\frac{A_\text{sat}^2}{\beta\sigma^2}\int_0^\infty\!\!\!\!\! \frac{z^{2/\beta-1}}{\left[z \!+\! A_{\text{sat}}^{2\beta} \right]^{\frac{1}{\beta}}}\exp\left(-\frac{z^{1/\beta}}{\sigma^2} \right)dz,
\end{align}
which follows from the transformation $x^{2\beta} \to z$. Then, by using the identity $\exp(x) = G^{1,0}_{0,1} \left(-x \big | \mfrac{-}{0} \right)$ and the result from \cite[2.24.2.4]{PB}, we have for $\beta \in \mathbb{Z}^+$
\begin{align}
\mathbb{E}\{|F_{\text{HPA}}(x(n))&|^2\} = \frac{A_\text{sat}^4 (2\pi)^{(1-\beta)/2}}{\sqrt{\beta}\sigma^2\Gamma(1/\beta)}\nonumber\\
&\times G_{1,\beta+1}^{\beta+1,1} \left( \left[\frac{A_\text{sat}^2}{\beta\sigma^2}\right]^\beta \bigg | \mfrac{1-2/\beta, -}{-1/\beta, \Delta(\beta)} \right),\label{hpa}
\end{align}
where $\Delta(\beta) = 0,1/\beta,\dots,(\beta-1)/\beta$ and $G_{p q}^{m n} \left(\cdot ~ \Big | ~ \mfrac{a_1,\dots, a_p}{b_1, \dots, b_q} \right)$ denotes the Meijer G-function \cite[9.301]{GRA}. Clearly, the HPA nonlinearities have a negative effect on the harvested energy, that is, \eqref{hpa_rice} and \eqref{hpa} are always less or equal to \eqref{var2} and $\sigma_u^2/(1-a^2)$, respectively. Note that equality is achieved for large values of $A_\text{sat}$ since then we have $f_A(|z|) \to |z|$.

\begin{figure}
\centering
\includegraphics[width=0.75\linewidth]{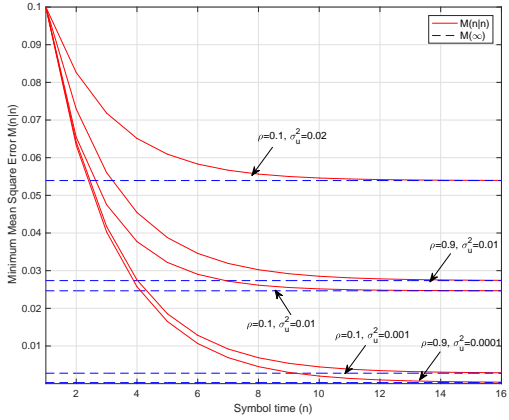}
\vspace{-0.15cm}
\caption{MMSE versus $n$; $a=0.8$, $\sigma_0^2=0.1$, $\mu_0=0$, $\sigma_u^2=\{0.0001, 0.001, 0.02 \}$, $\sigma_v^2=1$, $\sigma_q^2=0.5$, $\rho=\{0.1, 0.9\}$ and $|h|^2=1$.}\label{fig2}
\end{figure}

\begin{figure}
\centering
\includegraphics[width=0.75\linewidth]{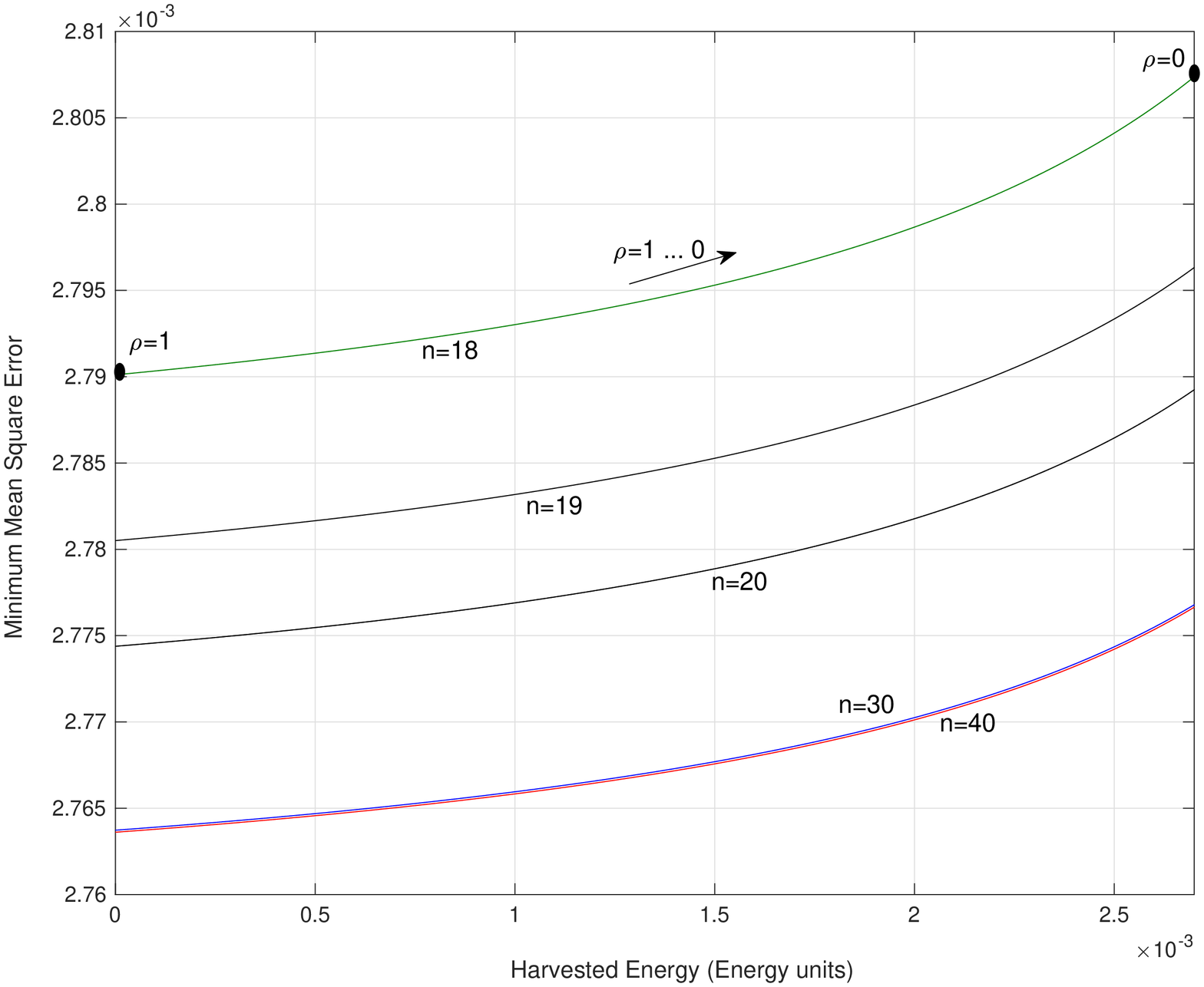}
\vspace{-0.2cm}
\caption{Estimation-energy tradeoff for various $n$; $\mu_0=0$, $\sigma_0^2=0.1$, $a=0.8$, $\sigma_u^2=0.001$, $\sigma_v^2=1$, $\sigma_q^2=0.5$, $|h|^2=1$, $\zeta=1$, $\rho\in [0,1]$.}\label{fig3}
\end{figure}

\section{Numerical results}
Computer simulations are carried-out to validate the performance of the proposed architecture. Fig. \ref{fig2} refers to the static channel scenario and plots the achieved MMSE 	$M(n|n)$ versus the symbol time $n$ for different parameters $\rho$ and $\sigma_u^2$. Since $\sigma_0^2>\sigma_u^2$, it can be seen that the MMSE performance is improved as $n$ increases. Asymptotically and according to the expression in \eqref{asymp1}, the MMSE performance converges to the floor $M(\infty)$. It can be also seen that the MMSE performance is improved as the parameter $\rho$ increases since a higher fraction of the received signal is used for state estimation. A similar trend is observed when the parameter $\sigma_u^2$ decreases; the excitation noise $\sigma_u^2$ is more critical for the estimation process since it determines the dynamics of the states and therefore significantly affects the performance of the linear Kalman filter. 

Fig. \ref{fig3} shows the MMSE-harvested energy tradeoff for various values of $n$ and for $0 \leq \rho \leq 1$ ($n$ affects the tradeoff when the system operates in the non asymptotic regime); each point of the curves corresponds to a different power splitting parameter $\rho$. 
It can be seen that as $n$ increases both estimation and harvesting performance are improved; however, the performance becomes independent of $n$ for $n>30$ (asymptotic regime).  In addition,  we observe that $\rho$ significantly affects the tradeoff since it defines the fraction of the received signal that is used for estimation and harvesting, respectively. The parameter $\rho$ is designed based on the system requirements to achieve a specific estimation-harvesting tradeoff.

Fig. \ref{fig4} deals with the fading case and presents the proposed CDF upper/lower bounds for different values of $a$; the empirical/true CDF is also plotted based on Monte Carlo simulations. As it can be seen, the proposed bounds efficiently approximate the true CDF for all the values of $a$; in addition, it can be seen that as $a$ decreases the CDF curves are shifted to the left ( i.e., the MMSE performance degrades since the correlation of consecutive states decreases) and their gap is squeezed. Fig. \ref{fig44} plots the average MMSE versus the harvested energy for a fading scenario with small values of $a$ and large values of $n$ (asymptotic case); for different values of $\rho$, the proposed architecture achieves a different estimation-energy tradeoff. A comparison with the static case, shows that channel fading boosts the MMSE performance due to the associated time diversity. Finally, in Fig. \ref{fig5}, we study the impact of HPA and plot the MMSE performance versus the symbol time $n$ (we assume $\sigma_0^2=\sigma_u^2<\sigma^2$ and thus the MMSE increases as $n$ increases).  It can be seen that HPA significantly affects the MMSE performance since it makes the observation equation nonlinear and degrades the performance of the linear Kalman filter (extended Kalman).

\begin{figure}
\centering
\includegraphics[width=0.77\linewidth]{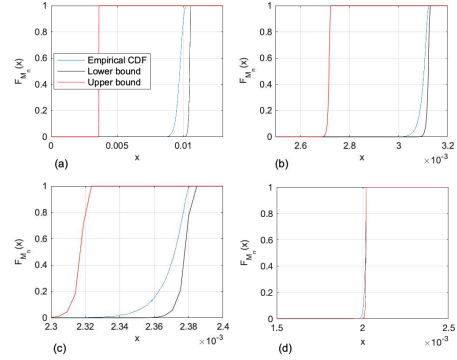}
\vspace{-0.3cm}
\caption{CDF of MMSE for the fading scenario- empirical and upper/lower bounds; $n=10$, $\rho=0.9$, $\mu_0=0$, $\sigma_0^2=0.1$, $\sigma_u^2=0.002$, $\sigma_v^2=1$, $\sigma_q^2=0.5$, $\lambda=1$; (a) $a=0.9$, (b) $a=0.6$, (c) $a=0.4$, and (d) $a=0.1$.}\label{fig4}
\end{figure}

\begin{figure}
\centering
\includegraphics[width=0.7\linewidth]{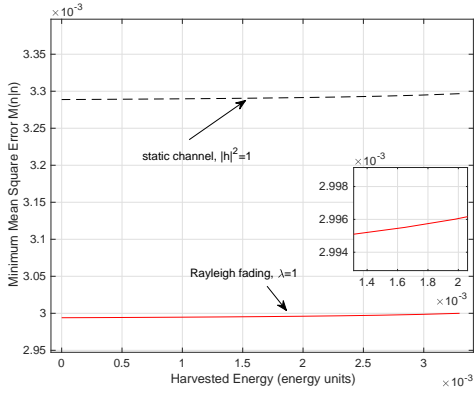}
\vspace{-0.2cm}
\caption{MMSE versus energy harvesting for a fading scenario (asymptotic case) with $\lambda=1$, $a=0.3$, $\zeta=1$, $\sigma_u^2=0.003$, $\sigma_v^2=1$, $\sigma_q^2=0.5$.}\label{fig44}
\end{figure}

\begin{figure}
\centering
\includegraphics[width=0.7\linewidth]{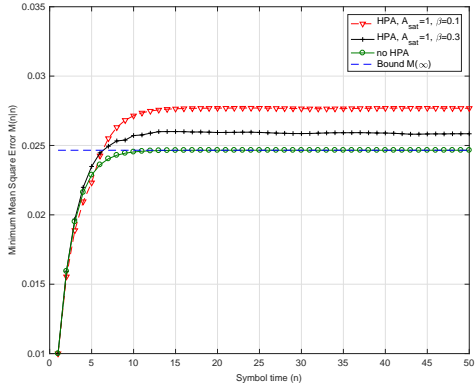}
\vspace{-0.2cm}
\caption{MMSE versus the symbol time $n$ for the scenario with HPA; $\rho=0.9$, $a=0.8$, $\mu_0=0$, $\sigma_0^2=0.01$, $\sigma_u^2=0.01$, $\sigma_v^2=1$, $\sigma_q^2=0.5$, and $|h|^2=1$.}\label{fig5}
\end{figure}

\vspace{-0.2cm}
\section{Conclusion}

This paper has studied a receiver architecture which enables simultaneous estimation of a Gauss-Markov signal and WPT. We have considered three operation scenarios: static channel, Rayleigh block-fading, and transmission with HPA. The considered architecture employs a Kalman filter for estimation and is characterized by a fundamental tradeoff between estimation quality (MMSE) and average harvested energy. Closed-form expressions are given for both MSE performance and average harvested energy. It has been shown that channel fading enhances MSE while HPA significantly affects both estimation/harvesting performance.

\appendices 

\vspace{-0.2cm}
\section{Proof of Theorem 1}

An upper bound corresponds to the case where, due to the channel fading, no observation is available at the receiver. In this case, the MMSE is upper bounded by $M(n|n-1)\leq \text{variance}(x(n))\rightarrow \frac{\sigma_u^2}{1-a^2}$ (for high $n$) which is similar to the initialization phase of the Kalman filter \cite{KAY}. Regarding the lower bound, from the expressions in \eqref{e2}, we have $M(n|n-1)\geq \sigma_u^2$ since $M(n-1|n-1)\geq 0$.

Since the MMSE $M(n|n-1)$ is bounded in both cases by a constant, the random variable $M(n|n)$ is simplified as 
\begin{align}
M(n|n)=\frac{\sigma_e^2}{\sigma_e^2/c+\rho |h(n)|^2},
\end{align}
where $c=a^2 c_b+\sigma_u^2$ is a constant and $c_b$ denotes the upper/lower bound. The CDF is given by 
\begin{align}
F_{M_n}(x)&=\mathbb{P} \left \{\frac{\sigma_e^2}{\sigma_e^2/c+\rho |h(n)|^2} \leq x \right \} \nonumber \\
&=\mathbb{P} \left \{|h(n)|^2\geq \frac{\sigma_e^2}{\rho}\left[ \frac{1}{x}-\frac{1}{c}\right] \right\} \nonumber \\
&=\left\{ \begin{array}{ll} \exp \left(-\frac{\lambda \sigma_e^2}{\rho}\left(\frac{1}{x}-\frac{1}{c} \right)\right),\;\;\textrm{if}\;x<c, \\
1,\;\;\textrm{elsewhere.}\end{array}\right. 
\end{align}
For the upper bound, we have $c_b=\sigma_u^2/(1-a^2)$ and thus $c=c_b=\sigma_u^2/(1-a^2)$; for the lower bound $c_b=\sigma_u^2$ and thus $c=\sigma_u^2(1+a^2)$. This completes the proof. 

\vspace{-0.2cm}
\section{Extended Kalman - Computation of $f(n)$}

Let $z=x+jy$ be the complex input to the power amplifier. Then, the output is given by 
\begin{align}
F_{\text{HPA}}(z)&=f_A(|z|)\exp(j \arg{z}) \nonumber \\
&= \frac{\sqrt{x^2+y^2}}{\left[ \left( \frac{\sqrt{x^2+y^2}}{A_{\text{sat}}} \right)^{2\beta}+1 \right]^{\frac{1}{2\beta}}} \cos \left(\tan^{-1}\frac{y}{x} \right ) \nonumber \\
&+j \frac{\sqrt{x^2+y^2}}{\left[ \left( \frac{\sqrt{x^2+y^2}}{A_{\text{sat}}} \right)^{2\beta}+1 \right]^{\frac{1}{2\beta}}} \sin \left(\tan^{-1}\frac{y}{x} \right ) \nonumber \\
&= U(x,y)+jV(x,y).
\end{align}
By applying the Cauchy Riemann equations \cite[11.2]{CR},
{\scriptsize
\begin{align}
&f(n)=\left. \frac{dF_{\text{HPA}}}{dz} \right|_{z=x_0+j y_0}=\frac{dU}{dx}(x_0,y_0)+j\frac{dV}{dx}(x_0,y_0) \nonumber \\
&= \frac{\sqrt{x_0^2+y_0^2} \left(1+\left(\frac{x_0^2+y_0^2}{A_{\text{sat}}^2}\right)^{\beta}\right)^{-\frac{1+2\beta}{2\beta}} \left(x_0^2+y_0^2 \left(1+\left(\frac{x_0^2+y_0^2}{A_{\text{sat}}^2}\right)^{\beta}\right)\right)}{\left(x_0^2+y_0^2\right)^{3/2}} \nonumber \\
&\qquad-j \frac{ y_0 \left(\frac{\sqrt{x_0^2+y_0^2}}{A_{\text{sat}}}\right)^{2\beta-1} \left(1+\left(\frac{x_0^2+y_0^2}{A^2_{\text{sat}}}\right)^{\beta}\right)^{-\frac{1+2\beta}{2\beta}}}{A_{\text{sat}} \sqrt{1+\frac{y_0^2}{x_0^2}}},
\end{align}}
where $z=\hat{x}(n|n-1)$, $x_0=\Re\{\hat{x}(n|n-1)\}$ and $y_0=\Im\{\hat{x}(n|n-1)\}$.

\end{document}